\definecolor{gray}{rgb}{0.93,0.93,0.93}
\definecolor{light-gold}{rgb}{0.99,0.97,0.78}
\def\be{\begin{equation}}
\def\ee{\end{equation}}
\def\bm{\begin{multline}}
\def\bfig{\begin{figure}[htb]}
\def\efig{\end{figure}}
\newcommand{\dd}{{\rm d}}
\newcommand{\e}[1]{\,{\rm e}^{#1}\,}
\newcommand{\ii}{{\rm i}}
\def\Tr{{\operatorname{Tr\,}}}
\newcommand{\sumtwo}[2]{\sum_{\substack{#1 \\ #2}}}
\numberwithin{equation}{section}
\newtheorem{theorem}{Theorem}
\newtheorem{proposition}[theorem]{Proposition}
\newtheorem{lemma}[theorem]{Lemma}
\newtheorem{corollary}[theorem]{Corollary}
\theoremstyle{remark}
\newcommand{\caH}{{\mathcal H}}
\newcommand{\bbC}{{\mathbb C}}
\newcommand{\bbN}{{\mathbb N}}
\newcommand{\bbZ}{{\mathbb Z}}
\newcommand{\eps}{{\varepsilon}}
\def\bbone{{\mathchoice {\rm 1\mskip-4mu l} {\rm 1\mskip-4mu l} {\rm 1\mskip-4.5mu l} {\rm 1\mskip-5mu l}}}
\newcommand{\lda}{\langle\!\langle}
\newcommand{\rda}{\rangle\!\rangle}
  \def\tagform@#1{\maketag@@@{\footnotesize{(#1)}\@@italiccorr}}
\renewcommand{\eqref}[1]{(\ref{#1})}
\begin{document}


\title{Correlation inequalities for the quantum XY model}

\author{Costanza Benassi}
\author{Benjamin Lees}
\author{Daniel Ueltschi}
\address{Department of Mathematics, University of Warwick,
Coventry, CV4 7AL, United Kingdom}
\email{C.Benassi@warwick.ac.uk}
\email{B.Lees@warwick.ac.uk}
\email{daniel@ueltschi.org}

\subjclass{82B10, 82B20, 82B26}

\keywords{Griffiths-Ginibre correlation inequalities; quantum XY model.}

\begin{abstract}
We show the positivity or negativity of truncated correlation functions in the quantum XY model with spin 1/2 (at any temperature) and spin 1 (in the ground state). These Griffiths-Ginibre inequalities of the second kind generalise an earlier result of Gallavotti.
\end{abstract}

\thanks{\copyright{} 2016 by the authors. This paper may be reproduced, in its entirety, for non-commercial purposes.}

\maketitle

\section{Introduction \& results}
\label{sec intro}

Correlation inequalities, initially proposed by Griffiths \cite{Gri}, have been an invaluable tool in the study of several classical spin systems. They have helped to establish the infinite volume limit of correlation functions, the monotonicity of spontaneous magnetisation, and they have allowed to make comparisons between different spatial or spin dimensions. It would be helpful to have similar tools for the study of quantum systems, but results are scarce. Gallavotti has obtained inequalities for truncated functions in the case of the quantum XY model with spin 1/2 and pair interactions \cite{Gal}. Inequalities for (untruncated) correlations in more general models were proposed in \cite{FU}.

In his extension of Griffiths' inequalities, Ginibre proposed a setting that also applies to quantum spin systems \cite{Gin}. The goal of this note is to show that the quantum XY model fits the setting, at least with $S = \frac12$ and $S=1$. It follows that many truncated correlation functions take a fixed sign.

Let $\Lambda$ denote the (finite) set of sites that host the spins. The Hilbert space of the model is $\caH_{\Lambda} = \otimes_{x\in\Lambda} \bbC^{2S+1}$ with $S \in \frac12 \bbN$. Let $S^{i}$, $i=1,2,3$ denote usual spin operators on $\bbC^{2S+1}$; that is, they satisfy the commutation relations $[S^{1},S^{2}] = \ii S^{3}$, and other relations obtained by cyclic permutation of the indices $1,2,3$. They also satisfy the identity $(S^{1})^{2} + (S^{2})^{2} + (S^{3})^{2} = S(S+1)$.
Finally, let $S_{x}^{i} = S^{i} \otimes \bbone_{\Lambda\setminus\{x\}}$ denote the spin operator at site $x$. We consider the hamiltonian
\be\label{Ham}
H_{\Lambda} = -\sum_{A \subset \Lambda} \Bigl( J_{A}^{1} \prod_{x\in A} S_{x}^{1} + J_{A}^{2} \prod_{x\in A} S_{x}^{2} \Bigr).
\ee
Here, $J_{A}^{i}$ is a nonnegative coupling constant for each subset of $A \subset \Lambda$ and each spin direction $i \in \{1,2\}$. The expected value of an observable $a$ (that is, an operator on $\caH_{\Lambda}$) in the Gibbs state with hamiltonian $H_{\Lambda}$ and at inverse temperature $\beta>0$ is
\be
\label{def Gibbs}
\langle a \rangle = \frac1{Z(\Lambda)} \Tr a \e{-\beta H_{\Lambda}},
\ee
where the normalisation $Z(\Lambda)$ is the partition function
\be
Z(\Lambda) = \Tr \e{-\beta H_{\Lambda}}.
\ee
Traces are taken in $\caH_{\Lambda}$.
We also consider Schwinger functions that are defined for $s\in[0,1]$ by
\be
\label{def Schwinger}
\langle a ; b \rangle_{s} = \frac1{Z(\Lambda)} \Tr a \e{-s\beta H_{\Lambda}} b \e{-(1-s) \beta H_{\Lambda}}.
\ee

Our first result holds for $S=\frac12$ and all temperatures.

\begin{theorem}
\label{thm ineq}
Assume that $J_{A}^{i} \geq 0$ for all $A \subset \Lambda$ and all $i \in \{1,2\}$. Assume also that $S=\frac12$. Then for all $A,B \subset \Lambda$, and all $s\in[0,1]$, we have
\[
\begin{split}
&\Bigl\langle \prod_{x\in A} S_{x}^{1} ; \prod_{x\in B} S_{x}^{1} \Bigr\rangle_{s} - \Bigl\langle \prod_{x\in A} S_{x}^{1} \Bigr\rangle \, \Bigl\langle \prod_{x\in B} S_{x}^{1} \Bigr\rangle \geq 0; \\
&\Bigl\langle \prod_{x\in A} S_{x}^{1} ; \prod_{x\in B} S_{x}^{2} \Bigr\rangle_{s} - \Bigl\langle \prod_{x\in A} S_{x}^{1} \Bigr\rangle \, \Bigl\langle \prod_{x\in B} S_{x}^{2} \Bigr\rangle \leq 0.
\end{split}
\]
\end{theorem}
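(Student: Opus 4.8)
The plan is to realise the model within Ginibre's duplication framework: I would rewrite each truncated function as a single expectation in a doubled system, expand the Boltzmann factor into a manifestly nonnegative superposition, and reduce the two claimed signs to one combinatorial positivity statement about traces of ordered words in the Pauli matrices.

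I would work on $\caH_{\Lambda}\otimes\caH_{\Lambda}$ with duplicated hamiltonian $\widehat H = H_{\Lambda}\otimes\bbone + \bbone\otimes H_{\Lambda}$, and for an operator $a$ set $a_{\pm}=a\otimes\bbone\pm\bbone\otimes a$. Writing $S_{A}^{i}=\prod_{x\in A}S_{x}^{i}$, a short computation---expanding the product $a_{-}b_{-}$ and using that $\e{-t\beta\widehat H}=\e{-t\beta H_{\Lambda}}\otimes\e{-t\beta H_{\Lambda}}$ factorises---gives, for $a=S_{A}^{1}$ and $b=S_{B}^{i}$,
\[
\langle S_{A}^{1} ; S_{B}^{i}\rangle_{s} - \langle S_{A}^{1}\rangle\,\langle S_{B}^{i}\rangle = \frac{1}{2 Z(\Lambda)^{2}}\,\Tr\!\otimes\!\Tr\Bigl[ a_{-}\,\e{-s\beta\widehat H}\,b_{-}\,\e{-(1-s)\beta\widehat H}\Bigr].
\]
Hence it suffices to show that this duplicated trace is $\geq 0$ for $i=1$ and $\leq 0$ for $i=2$, uniformly in $s$.

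Next I would expand both exponentials in their Duhamel/Taylor series. Since $-\widehat H = \sum_{A}\bigl( J_{A}^{1}(S_{A}^{1})_{+} + J_{A}^{2}(S_{A}^{2})_{+}\bigr)$ with all $J_{A}^{i}\geq 0$, every resulting term carries a nonnegative scalar weight and is a trace
\[
\Tr\!\otimes\!\Tr\Bigl[ (S_{A}^{1})_{-}\,\textstyle\prod_{r}(S_{C_{r}}^{j_{r}})_{+}\,(S_{B}^{i})_{-}\,\prod_{r'}(S_{D_{r'}}^{j_{r'}})_{+}\Bigr],
\]
so it is enough to fix the sign of each such trace. Expanding every factor $(\,\cdot\,)_{\pm}$ into its two one-copy summands, the trace becomes a signed sum, over assignments of each operator to copy $1$ or copy $2$, of products $\Tr[\mathcal O_{1}]\,\Tr[\mathcal O_{2}]$ of two one-copy traces, the sign being $(-1)$ raised to the number of the two ``$-$'' observables assigned to copy $2$. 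Each one-copy trace then factorises over the sites of $\Lambda$ into elementary traces of ordered words in $S^{1}$ and $S^{2}$.

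The heart of the matter---and the step I expect to be the real obstacle---is to control the interference of all these signed contributions. Here $S=\tfrac12$ enters decisively: in the basis of $\bbC^{2}$ in which $S^{1}=\tfrac12\sigma^{3}$ and $S^{2}=-\tfrac12\sigma^{1}$ are two anticommuting involutions (up to the factor $\tfrac12$), each elementary trace is real and equals $2^{1-m}(-1)^{c}$, where $m$ is the word length and $c$ counts the interleavings (``crossings'') of $S^{1}$- and $S^{2}$-letters at that site, the trace vanishing unless both letter-counts are even. I would then argue that, once summed over all copy-assignments, these crossing signs together with the assignment signs never produce net cancellation: for $i=1$ the signed sum is nonnegative, whereas promoting the observable $S_{B}^{1}$ to $S_{B}^{2}$ changes the relevant crossing parities so as to flip the total to nonpositive. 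Establishing this sign coherence between the two non-commuting components $S^{1}$ and $S^{2}$---equivalently, that the alternating-sign sum over copy-assignments is a genuinely positive (respectively negative) quantity---is the crux; the duplication and the two-dimensionality of the on-site space are exactly what make it accessible, for instance by organising the words according to the symmetric/antisymmetric splitting of $\bbC^{2}\otimes\bbC^{2}$ under exchange of the two copies. For higher spin the on-site operators are no longer simple involutions, which is consistent with the restriction to $S=\tfrac12$ here.
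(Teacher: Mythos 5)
Your setup is exactly the paper's: the Ginibre duplication $\caH_\Lambda\otimes\caH_\Lambda$, the operators $a_\pm=a\otimes\bbone\pm\bbone\otimes a$, the identity expressing the truncated Schwinger function as $\tfrac12\lda a_-;b_-\rda_s$, and the Taylor expansion of the exponentials using $-H_{\Lambda,+}=\sum_A\bigl(J_A^1(S_A^1)_++J_A^2(S_A^2)_+\bigr)$ with nonnegative weights. Up to that point the proposal is correct and matches the paper step for step.

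The gap comes immediately after, and it is fatal as stated. You expand every doubled factor $(\cdot)_\pm$ into its two one-copy summands, turning each trace into a signed sum over copy assignments of products of single-copy traces, and then you assert that ``these crossing signs together with the assignment signs never produce net cancellation.'' That assertion \emph{is} the theorem: once you split the $\pm$ operators across the two copies, the doubled trace collapses back into exactly the signed expansion you would get by naively Taylor-expanding the original truncated correlation with no duplication at all, so nothing has been gained and the hard cancellation problem is untouched. The paper's proof never performs this splitting. It keeps everything expressed in the doubled operators and uses three ingredients you are missing: (i) the algebraic identity $(ab)_\pm=\tfrac12 a_+b_\pm+\tfrac12 a_-b_\mp$, which lets one expand the multi-site products $(\prod_{x\in A}S_x^i)_\pm$ as polynomials in the \emph{single-site} doubled operators $S^i_{x,\pm}$ with positive coefficients; (ii) the key lemma that there is an orthonormal basis of $\bbC^2\otimes\bbC^2$ (the Bell-type basis $p_\pm=\tfrac1{\sqrt2}(|{+}{+}\rangle\pm|{-}{-}\rangle)$, $q_\pm=\tfrac1{\sqrt2}(|{-}{+}\rangle\pm|{+}{-}\rangle)$, after rotating the spin directions $1,2$ to $1,3$) in which $S^1_+$, $S^1_-$, $S^3_+$ and $-S^3_-$ all have nonnegative matrix elements; and (iii) the parity observation that the total number of $S^3_{x,-}$ factors in each monomial is even for the first inequality and odd for the second, which fixes the overall signs. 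Your closing remark about organising words by the symmetric/antisymmetric splitting under exchange of the copies is indeed the germ of (ii)---that basis is exchange-adapted---but without producing the basis, checking nonnegativity of the matrix elements, and supplying (i) and (iii), the ``sign coherence'' you need remains an unproven claim equivalent to the statement being proved.
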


Clearly, other inequalities can be generated using spin symmetries.
The corresponding inequalities for the classical XY model have been proposed in \cite{KPV}.

The proof of Theorem \ref{thm ineq} can be found in Section \ref{sec S=1/2}. It is based on Ginibre's structure \cite{Gin}. It is simpler than Gallavotti's, who used an ingenious approach based on the Trotter product formula, on a careful analysis of transition operators, and on Griffiths' inequalities for the classical Ising model \cite{Gal}. Our proof allows us to go beyond pair interactions.

A consequence of Theorem \ref{thm ineq} is the monotonicity of certain spin correlations with respect to the coupling constants:

\begin{corollary}
\label{cor ineq}
Under the same assumptions as in the above theorem, we have for all $A,B \subset \Lambda$ that
\[
\begin{split}
&\frac{\partial}{\partial J_{A}^{1}} \Bigl\langle \prod_{x\in B} S_{x}^{1} \Big\rangle \geq 0; \\
&\frac{\partial}{\partial J_{A}^{1}} \Bigl\langle \prod_{x\in B} S_{x}^{2} \Big\rangle \leq 0.
\end{split}
\]
\end{corollary}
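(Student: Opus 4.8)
The plan is to differentiate the Gibbs expectation $\bigl\langle \prod_{x\in B} S_x^i \bigr\rangle$ with respect to the coupling constant $J_A^1$ and to recognise the result as an integral of precisely the truncated correlations controlled by Theorem \ref{thm ineq}. The starting point is that the hamiltonian \eqref{Ham} depends linearly on each coupling, so that, abbreviating $a = \prod_{x\in A} S_x^1$ and $b = \prod_{x\in B} S_x^i$ for $i\in\{1,2\}$, we have $\partial H_\Lambda/\partial J_A^1 = -a$. To differentiate the exponential I would invoke the Duhamel (variation-of-constants) formula, which gives
\[
\frac{\partial}{\partial J_A^1} \e{-\beta H_\Lambda} = \beta \int_0^1 \e{-s\beta H_\Lambda}\, a\, \e{-(1-s)\beta H_\Lambda}\, \dd s .
\]

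Next I would apply the quotient rule to $\langle b\rangle = Z(\Lambda)^{-1}\Tr b\,\e{-\beta H_\Lambda}$. The numerator produces $\beta\int_0^1 \langle b;a\rangle_s\,\dd s$ directly from the display above and the definition \eqref{def Schwinger}. For the denominator, cyclicity of the trace collapses the two exponentials in $\partial_{J_A^1} Z(\Lambda)$ into a single $s$-independent term, yielding $\partial_{J_A^1}\log Z(\Lambda) = \beta\langle a\rangle$. Combining the two contributions gives the clean identity
\[
\frac{\partial}{\partial J_A^1}\langle b\rangle = \beta \int_0^1 \bigl( \langle b;a\rangle_s - \langle b\rangle\,\langle a\rangle \bigr)\, \dd s .
\]
This reduces the corollary entirely to a statement about the sign of the integrand.

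Finally I would match this integrand to Theorem \ref{thm ineq}. Using the symmetry $\langle b;a\rangle_s = \langle a;b\rangle_{1-s}$, which follows from cyclicity of the trace, the integrand equals $\langle a;b\rangle_{1-s} - \langle a\rangle\langle b\rangle$ with $1-s\in[0,1]$. For $i=1$ the first inequality of the theorem makes this nonnegative, and since $\beta>0$ we get $\partial_{J_A^1}\langle \prod_{x\in B}S_x^1\rangle\geq 0$; for $i=2$ the second inequality makes it nonpositive, giving the reverse sign. The argument is essentially a direct corollary, so I do not anticipate a genuine obstacle; the only point requiring care is the bookkeeping of the Duhamel formula together with trace cyclicity, and in particular the recognition that the derivative is exactly the $s$-integrated truncated correlation — no integrability or convergence issues arise because $\Lambda$ is finite and all operators are bounded.
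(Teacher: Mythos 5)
Your proposal is correct and is essentially the paper's own proof: the paper also derives the identity $\frac1\beta \frac{\partial}{\partial J_{A}^{i}} \bigl\langle \prod_{x\in B} S_{x}^{j} \bigr\rangle = \int_{0}^{1} \bigl[ \langle \prod_{x\in B} S_{x}^{j} ; \prod_{x\in A} S_{x}^{i} \rangle_{s} - \langle \prod_{x\in B} S_{x}^{j} \rangle \langle \prod_{x\in A} S_{x}^{i} \rangle \bigr] \dd s$ (via Duhamel) and invokes Theorem \ref{thm ineq}. Your additional observation that $\langle b;a\rangle_s = \langle a;b\rangle_{1-s}$ by cyclicity, needed to put the integrand in the exact form of the theorem, is a detail the paper leaves implicit but is handled correctly in your write-up.
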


The first inequality states that correlations increase when the coupling constants increase (in the same spin direction). The second inequality is perhaps best understood classically; if the first component of the spins increases, the other components must decrease because the total spin is conserved.
Corollary \ref{cor ineq} follows immediately from Theorem \ref{thm ineq} since
\be
\frac1\beta \frac{\partial}{\partial J_{A}^{i}} \Bigl\langle \prod_{x\in B} S_{x}^{j} \Big\rangle = \int_{0}^{1} \Bigl[ \Bigl\langle \prod_{x\in B} S_{x}^{j} ; \prod_{x\in A} S_{x}^{i} \Bigr\rangle_{s} - \Bigl\langle \prod_{x\in B} S_{x}^{j} \Bigr\rangle \, \Bigl\langle \prod_{x\in A} S_{x}^{i} \Bigr\rangle \Bigr] \dd s.
\ee

We use this corollary in Section \ref{sec inf vol limit} to give a partial construction of infinite-volume Gibbs states.

The case of higher spins, $S > \frac12$, is much more challenging, but we have obtained an inequality that is valid in the ground state of the $S=1$ model. Recall that the states $\langle\cdot\rangle$ and $\langle\cdot;\cdot\rangle_s$, defined in Eqs \eqref{def Gibbs} and \eqref{def Schwinger}, depend on the inverse temperature $\beta$.

\begin{theorem}
\label{thm ineq2}
Assume that $J_{A}^{i} \geq 0$ for all $A \subset \Lambda$ and all $i \in \{1,2\}$. Assume also that $S=1$. Then for all $A,B \subset \Lambda$, and all $s\in[0,1]$, we have
\[
\begin{split}
&\lim_{\beta\to\infty} \biggl[ \Bigl\langle \prod_{x\in A} S_{x}^{1} ; \prod_{x\in B} S_{x}^{1} \Bigr\rangle_{s} - \Bigl\langle \prod_{x\in A} S_{x}^{1} \Bigr\rangle \, \Bigl\langle \prod_{x\in B} S_{x}^{1} \Bigr\rangle \biggr] \geq 0; \\
&\lim_{\beta\to\infty} \biggl[ \Bigl\langle \prod_{x\in A} S_{x}^{1} ; \prod_{x\in B} S_{x}^{2} \Bigr\rangle_{s} - \Bigl\langle \prod_{x\in A} S_{x}^{1} \Bigr\rangle \, \Bigl\langle \prod_{x\in B} S_{x}^{2} \Bigr\rangle \biggr] \leq 0.
\end{split}
\]
\end{theorem}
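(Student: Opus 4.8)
The plan is to deduce Theorem \ref{thm ineq2} from the spin-$\tfrac12$ result of Theorem \ref{thm ineq} by representing each spin $1$ as a symmetric pair of spin-$\tfrac12$'s, and by using the limit $\beta\to\infty$ to project onto the symmetric subspace. Recall that on $\bbC^2\otimes\bbC^2$ the triplet (symmetric) subspace carries the spin-$1$ representation, with $S^i = s^i\otimes\bbone + \bbone\otimes s^i$, where $s^i$ are the spin-$\tfrac12$ operators. Accordingly I would introduce the doubled lattice $\tilde\Lambda = \Lambda\times\{1,2\}$, a spin-$\tfrac12$ at each site, and the hamiltonian
\be
\tilde H = -\sum_{A\subset\Lambda}\Bigl( J_A^1\prod_{x\in A}(s_{(x,1)}^1 + s_{(x,2)}^1) + J_A^2\prod_{x\in A}(s_{(x,1)}^2 + s_{(x,2)}^2)\Bigr).
\ee
Each factor $s_{(x,1)}^i+s_{(x,2)}^i$ expands, with coefficients $+1$, into single-site spin operators, so $\tilde H$ is again of the form \eqref{Ham} for $S=\tfrac12$ with nonnegative couplings. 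Moreover $\tilde H$ commutes with the transposition $P_x$ of the two spins above each $x$; on the fully symmetric subspace $\caH_\Lambda^{\rm sym} = \otimes_x\mathrm{Sym}(\bbC^2\otimes\bbC^2)$ it acts exactly as the spin-$1$ hamiltonian $H_\Lambda$, while on a configuration carrying the singlet at some $x$ every monomial containing $x$ vanishes, so that $x$ decouples.

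\textbf{Transporting Theorem \ref{thm ineq}.} Writing $\tilde Q_A^i = \prod_{x\in A}(s_{(x,1)}^i + s_{(x,2)}^i)$ for the images of the spin-$1$ observables, each $\tilde Q_A^i$ is a nonnegative linear combination of admissible monomials $\prod_{\tilde x} s_{\tilde x}^i$ over $\tilde\Lambda$. Since the truncated Schwinger functions are bilinear in the two observables, Theorem \ref{thm ineq} applied to $\tilde H$ at every finite $\beta$ yields, by taking nonnegative combinations,
\[
\Bigl\langle \tilde Q_A^1;\tilde Q_B^1\Bigr\rangle_s - \Bigl\langle\tilde Q_A^1\Bigr\rangle\Bigl\langle\tilde Q_B^1\Bigr\rangle\geq0,\qquad \Bigl\langle\tilde Q_A^1;\tilde Q_B^2\Bigr\rangle_s - \Bigl\langle\tilde Q_A^1\Bigr\rangle\Bigl\langle\tilde Q_B^2\Bigr\rangle\leq0,
\]
where the expectations are those of the enlarged spin-$\tfrac12$ Gibbs state.

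\textbf{The limit $\beta\to\infty$ and the main obstacle.} The final and decisive step is to let $\beta\to\infty$ and to identify the limiting enlarged expectations with the spin-$1$ ground-state quantities of Theorem \ref{thm ineq2}; since each finite-$\beta$ expression is already signed, the signs then pass to the limit. For this I would prove that the ground state of $\tilde H$ lies in $\caH_\Lambda^{\rm sym}$, so that on it $\tilde H$ and the $\tilde Q_A^i$ reduce to $H_\Lambda$ and the $Q_A^i$, whence both limits are governed by the same ground state and coincide. To locate the ground state I would apply the single-site rotation by $\tfrac\pi2$ about the $1$-axis, which fixes every $s^1$ and turns every $s^2$ into $\pm s^3$: in the common eigenbasis of the $s^3_{\tilde x}$ the rotated $-\tilde H$ then has nonnegative off-diagonal entries (from the $J^1$ monomials) and a diagonal part (from the $J^2$ monomials), so a Perron--Frobenius argument gives a ground state with nonnegative coefficients, and commutation with every $P_x$ forces it to be symmetric. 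I expect the genuine difficulty to be exactly this claim. The Perron--Frobenius step needs irreducibility of the off-diagonal part (connectedness of the $J^1_A$), which may have to be arranged by perturbing the couplings and passing to a limit, and one must control possible ground-state degeneracy. It is here that $S=1$ is essential: for two spin-$\tfrac12$'s the only alternative to the triplet is the singlet, which decouples completely, so the symmetric sector is energetically favoured; for $S>\tfrac12$ the non-symmetric sectors carry smaller but nonzero spins that keep interacting, and the comparison of energies across sectors — hence the whole reduction — breaks down.
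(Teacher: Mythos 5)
Your strategy coincides with the paper's (Section \ref{sec S=1}): double each site into a pair of spin-$\frac12$'s, note that the doubled hamiltonian is again of the form \eqref{Ham} with nonnegative couplings, transfer Theorem \ref{thm ineq} by bilinearity, and reduce everything to the claim that the ground states of $\tilde H_\Lambda$ lie in the all-triplet subspace. The gap is in your treatment of that claim, and it is genuine. Your Perron--Frobenius argument requires irreducibility, hence a simple ground state, and neither is available in general: if, for example, the only nonzero coupling is $J^1_{\{x,y\}}$, every off-diagonal term flips exactly one spin above $x$ and one above $y$, so the parity of $n_x-n_y$ (the numbers of up spins above $x$ and above $y$) is conserved and the matrix is block-reducible --- and the ground state of this model is indeed two-fold degenerate; if $J^1\equiv 0$, the (rotated) hamiltonian is diagonal and massively degenerate. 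Without simplicity, Perron--Frobenius yields \emph{one} nonnegative ground state, and your symmetrisation argument ($P_x\psi_0=\psi_0$) applies only to that vector. But what $\lim_{\beta\to\infty}\langle\cdot\rangle^{\sim}$ computes is the normalised trace over the \emph{entire} ground-state eigenspace: if even one ground state had a singlet component, this limit would differ from the $\beta\to\infty$ limit of the spin-$1$ state (which is the projected state $\langle\cdot\rangle'$), and the inequality could not be transferred. So what you must prove is that \emph{all} ground states are triplet.

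Your fallback --- perturb the couplings to force irreducibility and pass to the limit --- does not repair this, because of an order-of-limits problem: the theorem concerns $\lim_{\beta\to\infty}$ at \emph{fixed} couplings, and when the unperturbed ground state is degenerate, $\lim_{\delta\to 0}\lim_{\beta\to\infty}\langle\cdot\rangle_{\delta}$ selects particular ground states and need not reproduce the uniform average over the unperturbed ground space. The paper's resolution, which is the one idea missing from your proposal, avoids irreducibility altogether: the projectors $Q_{\Lambda,A}$ (triplet on $A$, singlet on $\Lambda\setminus A$) commute with $\tilde H_\Lambda$, and on the range of $Q_{\Lambda,A}$ the hamiltonian coincides with the one in which all couplings $J^i_B$ with $B\not\subset A$ are set to zero, Eq.\ \eqref{ham in subspace}. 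Lemma \ref{lem gs} then states that the ground-state energy is \emph{strictly} decreasing in every coupling, so every sector containing a singlet lies strictly above the bottom of the spectrum; hence all ground states, degenerate or not, are triplet. Crucially, the proof of Lemma \ref{lem gs} uses Perron--Frobenius only to produce \emph{some} nonnegative ground state $\psi_0$ (no irreducibility needed), then the variational bound, and, in the case $(\psi_0,\prod_{x\in Y}\sigma^1_x\,\psi_0)=0$, the computation of $\|(\hat H_\Lambda-\eps\prod_{x\in Y}\sigma^1_x)\psi_0\|^2$, which forces a strict drop of the ground-state energy anyway. Note finally that some hypothesis like the paper's standing assumption --- every site belongs to some $A$ with $J^i_A>0$ --- is necessary, since a completely decoupled site genuinely admits singlet ground states; that trivial case must be split off and treated separately, as the paper remarks.
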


The proof of this theorem can be found in Section \ref{sec S=1}. It uses Theorem \ref{thm ineq}.

\section{Infinite volume limit of correlation functions}
\label{sec inf vol limit}

Infinite volume limits of Gibbs states are notoriously delicate issues; we show in this section that Theorem \ref{thm ineq} (and Corollary \ref{cor ineq}) give partial but useful information: For Gibbs states ``with $+$ boundary conditions", the infinite volume limits of many correlation functions exist.

Let us recall the notion of infinite volume limit. Let $(t_\Lambda)_{\Lambda \subset\subset \bbZ^d}$ be a sequence of real or complex numbers, indexed by finite subsets of $\bbZ^d$. We say that $t_\Lambda \to t$ as $\Lambda \nearrow \bbZ^d$ if
\be
\lim_{n\to\infty} t_{\Lambda_n} = t
\ee
along every sequence $(\Lambda_n)$ of increasing finite subsets that tends to $\bbZ^d$. That is, the sequence satisfies $\Lambda_{n+1} \supset \Lambda_n$, and, for any finite $A \subset\subset \bbZ^d$, there exists $n_A$ such that $\Lambda_n \supset A$ for all $n \geq n_A$.

We assume the interaction is finite-range: There exists $R$ such that $J^i_A=0$ whenever ${\rm diam} \, A > R$. Let $\Lambda_R$ denote the enlarged domain
\be
\Lambda_R = \{ x \in \bbZ^d : {\rm dist}(x,\Lambda) \leq R \}.
\ee
Let $\partial_R \Lambda = \Lambda_R \setminus \Lambda$ be the exterior boundary of $\Lambda$. We consider the hamiltonian $H_{\Lambda_R}^\eta$ with field on the exterior boundary:
\be
H_{\Lambda_R}^\eta = -\sum_{A \subset \Lambda_R} \Bigl( J^1_A \prod_{x\in A} S_x^1 + J^2_A \prod_{x\in A} S_x^2 \Bigr) - \eta \sum_{x \in \partial_R \Lambda} S_x^1.
\ee
Temperature does not play a r\^ole in this section so we set $\beta=1$.
The relevant (finite volume) Gibbs state is the linear functional that, to any operator $a$ on $\caH_\Lambda$, assigns the value
\be
\label{Gibbs +}
\langle a \rangle_{\Lambda}^{(+)} = \lim_{\eta\to\infty} \frac{\Tr a \e{-H_{\Lambda_R}^\eta}}{\Tr \e{-H_{\Lambda_R}^\eta}}.
\ee
Traces are taken in $\caH_{\Lambda_R}$ (and $a$ on $\caH_\Lambda$ is identified with $a \otimes \bbone_{\partial_R \Lambda}$ on $\caH_{\Lambda_R}$). We comment below on the relevance of this definition for Gibbs states. But first, we observe that the limit $\eta\to\infty$ exists.

\begin{proposition}
\label{prop H_+}
For all operators $a$ on $\caH_\Lambda$, the limit in \eqref{Gibbs +} exists and is equal to
\[
\langle a \rangle_{\Lambda}^{(+)} = \frac{\Tr a \e{-H_{\Lambda}^{(+)}}}{\Tr \e{-H_{\Lambda}^{(+)}}},
\]
where traces are taken in $\caH_\Lambda$ and
\[
H_{\Lambda}^{(+)} = -\sum_{A \subset \Lambda} \Bigl( J^1_A \prod_{x\in A} S_x^1 + J^2_A \prod_{x\in A} S_x^2 \Bigr) - \sumtwo{A \subset \Lambda_R}{A \cap \partial_R \Lambda \neq \emptyset} 2^{-|A \cap \partial_R \Lambda|} J_A^1 \prod_{x\in A \cap \Lambda} S_x^1.
\]
\end{proposition}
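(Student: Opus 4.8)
The plan is to take the limit $\eta\to\infty$ and recognise it as a projection onto the subspace in which every boundary spin is maximally polarised in the $1$-direction. Write $H_{\Lambda_R}^\eta = H_0 - \eta N$, where $N = \sum_{x\in\partial_R\Lambda} S_x^1$ is the boundary field and $H_0$ gathers the $\eta$-independent terms. For $S=\frac12$ the operator $N$ has largest eigenvalue $n_{\max}=\frac12\abs{\partial_R\Lambda}$, attained precisely on the one-dimensional boundary subspace spanned by the fully polarised vector $\lvert{+}\rangle=\otimes_{x\in\partial_R\Lambda}\lvert{+}\rangle_x$, where $S_x^1\lvert{+}\rangle_x=\frac12\lvert{+}\rangle_x$; the next eigenvalue is $n_{\max}-1$, so there is a spectral gap $\delta=1$. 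Let $P=\bbone_\Lambda\otimes\lvert{+}\rangle\langle{+}\rvert$ be the orthogonal projection onto this top eigenspace, which we identify with $\caH_\Lambda$.

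First I would establish the operator limit
\[
\lim_{\eta\to\infty}\e{-\eta n_{\max}}\e{-H_{\Lambda_R}^\eta}=P\,\e{-P H_0 P}\,P ,
\]
the right-hand side being read as an operator supported on $P\caH_{\Lambda_R}$. Granting this, dividing the numerator and denominator of \eqref{Gibbs +} by $\e{-\eta n_{\max}}$ and using cyclicity of the trace gives $\langle a\rangle_\Lambda^{(+)}=\Tr(PaP\,\e{-P H_0 P})/\Tr(P\,\e{-PH_0P})$ with traces over $\caH_{\Lambda_R}$; since $PaP=a\otimes\lvert{+}\rangle\langle{+}\rvert$ for every $a$ on $\caH_\Lambda$, the flanking projections may be dropped and the traces collapse to traces over $\caH_\Lambda$, as required.

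I expect the operator limit to be the main obstacle, because $H_0$ and $N$ do not commute, so one cannot simply factor the exponential. To handle it I would shift energies, writing $\e{-\eta n_{\max}}\e{-H_{\Lambda_R}^\eta}=\e{-(H_0+\eta M)}$ with $M=n_{\max}\bbone-N\geq0$, $\ker M=P\caH_{\Lambda_R}$ and $M\geq\delta$ on the complement. All operators here are self-adjoint and act on a finite-dimensional space, so analytic perturbation theory applies: I would introduce the Riesz projection $\Pi_\eta=\frac1{2\pi\ii}\oint_\Gamma(z-H_0-\eta M)^{-1}\dd z$ along a fixed contour $\Gamma$ encircling the spectrum of $PH_0P$ only. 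A Schur-complement computation of the resolvent shows that its block on $(\bbone-P)\caH_{\Lambda_R}$ is $O(1/\eta)$, whence $\Pi_\eta\to P$ and $(H_0+\eta M)\Pi_\eta\to PH_0P$ as $\eta\to\infty$; meanwhile the remaining eigenvalues grow like $\eta\delta\to\infty$, so $\e{-(H_0+\eta M)}(\bbone-\Pi_\eta)\to0$. Combining these two facts yields the claimed limit.

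It then remains to identify $PH_0P$, restricted to $P\caH_{\Lambda_R}\cong\caH_\Lambda$, with $H_\Lambda^{(+)}$. This is a direct computation of boundary matrix elements: for $A\subset\Lambda$ the projection acts trivially and reproduces the bulk terms; for $A$ meeting $\partial_R\Lambda$ one uses $\langle{+}\rvert S_x^1\lvert{+}\rangle=\frac12$ on each boundary site, which turns $J_A^1\prod_{x\in A}S_x^1$ into $2^{-\abs{A\cap\partial_R\Lambda}}J_A^1\prod_{x\in A\cap\Lambda}S_x^1$, while $\langle{+}\rvert S_x^2\lvert{+}\rangle=0$ kills every $J_A^2$ term that touches the boundary. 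This is exactly the structure of $H_\Lambda^{(+)}$, and it is the only place where $S=\frac12$ enters.
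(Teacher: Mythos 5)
Your proposal is correct, and it reaches the same target—compressing the boundary degrees of freedom onto the fully polarised state and identifying the compressed hamiltonian with $H_\Lambda^{(+)}$—by a genuinely different route. The paper shifts the hamiltonian by the constant $\tfrac12\eta\,\abs{\partial_R\Lambda}$ and then runs a Trotter product argument: it writes $\e{-H_{\Lambda_R}^\eta}$ (after the shift) as $\lim_n \bigl[(1+\tfrac1n\sum_A J_A(\cdots))\e{\frac{\eta}{n}\sum_x(S_x^1-1/2)}\bigr]^n$, uses $\lim_{\eta\to\infty}\e{\eta(S_x^1-1/2)}=P_x^+$, and exchanges the limits $n\to\infty$ and $\eta\to\infty$ on the grounds that the Trotter expansion converges uniformly in $\eta$; the projection identities $P^+_{\partial_R\Lambda}\bigl(\prod_{x\in A}S_x^1\bigr)P^+_{\partial_R\Lambda}=2^{-\abs{A\cap\partial_R\Lambda}}\bigl(\prod_{x\in A\cap\Lambda}S_x^1\bigr)P^+_{\partial_R\Lambda}$ and the vanishing of boundary $S^2$ terms then produce $H_\Lambda^{(+)}$, exactly as in your final step. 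You instead prove the operator limit $\e{-\eta n_{\max}}\e{-H_{\Lambda_R}^\eta}\to P\e{-PH_0P}P$ directly by finite-dimensional spectral perturbation theory (Riesz projection, Schur complement, and the gap $\delta=1$ of the boundary field), avoiding Trotter altogether. What your route buys is rigour and self-containedness: the uniform-in-$\eta$ interchange of limits, which the paper asserts rather than proves, is replaced by explicit $O(1/\eta)$ resolvent estimates and the elementary observation that the eigenvalues outside the low cluster grow like $\eta\delta$. What the paper's route buys is brevity, and it sidesteps any discussion of spectral gaps or contours. Both arguments ultimately rest on the same boundary matrix-element computation, $\langle +\rvert S_x^1\lvert +\rangle=\tfrac12$ and $\langle +\rvert S_x^2\lvert +\rangle=0$, which you correctly flag as the only place where $S=\tfrac12$ enters.
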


\begin{proof}
We can add a convenient constant to the hamiltonian without changing the corresponding Gibbs state, so we consider
\be
\Tr a \exp\biggl\{ \sum_{A \subset \Lambda_R} \Bigl( J^1_A \prod_{x\in A} S_x^1 + J^2_A \prod_{x\in A} S_x^2 \Bigr) + \eta \sum_{x \in \partial_R \Lambda} (S_x^1 - \tfrac12) \biggr\}.
\ee
We have
\be
\lim_{\eta\to\infty} \e{\eta (S_x^1-\frac12)} = P_x^+,
\ee
where $P_x^+$ is the projector onto the eigenstates of $S_x^1$ with eigenvalue $\frac12$. Writing $P_A^+ = \prod_{x\in A} P_x^+$, we have
\be
\begin{split}
&P_{\partial_R \Lambda}^+ \Bigl( \prod_{x \in A} S_x^1 \Bigr) P_{\partial_R \Lambda}^+ = 2^{-|A \cap \partial_R \Lambda|} \Bigl( \prod_{x \in A \cap \Lambda} S_x^1 \Bigr) P_{\partial_R \Lambda}^+, \\
&P_{\partial_R \Lambda}^+ \Bigl( \prod_{x \in A} S_x^2 \Bigr) P_{\partial_R \Lambda}^+ = 0 \qquad \text{if } A \cap \partial_R \Lambda \neq \emptyset.
\end{split}
\ee
Then, since the Trotter expansion converges uniformly in $\eta$, we have
\be
\begin{split}
&\lim_{\eta\to\infty} \Tr a\exp\biggl\{ \sum_{A \subset \Lambda_R} \Bigl( J^1_A \prod_{x\in A} S_x^1 + J^2_A \prod_{x\in A} S_x^2 \Bigr) + \eta \sum_{x \in \partial_R \Lambda} (S_x^1 - \tfrac12) \biggr\} \\
&= \lim_{n\to\infty} \lim_{\eta\to\infty} \Tr a \biggl[ \biggl( 1 + \tfrac1n \sum_{A \subset \Lambda_R} \Bigl( J^1_A \prod_{x\in A} S_x^1 + J^2_A \prod_{x\in A} S_x^2 \Bigr) \biggr) \e{\frac1n \eta \sum_{x \in \partial_R \Lambda} (S_x^1 - \frac12)} \biggr]^n \\
&= \lim_{n\to\infty} \Tr a \bigl[ 1 - \tfrac1n H_{\Lambda}^{(+)} \bigr]^n \\
&= \Tr a \e{-H_{\Lambda}^{(+)}}.
\end{split}
\ee
\end{proof}

The challenge is to prove that $\langle a \rangle_{\Lambda}^{(+)}$ converges as $\Lambda \nearrow \bbZ^d$, for any operator $a$ on $\caH_{\Lambda'}$ with $\Lambda' \subset\subset \bbZ^d$ (again, $a$ on $\caH_{\Lambda'}$ is identified with $a \otimes \bbone_{\Lambda \setminus \Lambda'}$ on $\caH_\Lambda$ with $\Lambda \supset \Lambda'$). We can use the correlation inequalities to establish the existence of the infinite volume limit for certain operators $a$.

\begin{theorem}
\label{thm inf vol limit}
For every finite $A \subset\subset \bbZ^d$ and every $i \in \{1,2\}$, $\bigl\langle \prod_{x\in A} S_x^i \bigr\rangle_{\Lambda}^{(+)}$ converges as $\Lambda \nearrow \bbZ^d$.
\end{theorem}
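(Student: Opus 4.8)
The plan is to deduce convergence from monotonicity plus boundedness: I would show that, for fixed $A$, the finite-volume quantity $\langle \prod_{x\in A} S_x^1\rangle_\Lambda^{(+)}$ is decreasing in $\Lambda$ with respect to set inclusion while $\langle \prod_{x\in A} S_x^2\rangle_\Lambda^{(+)}$ is increasing, and that both are uniformly bounded. Since $S=\frac12$ one has $\bigl|\langle \prod_{x\in A}S_x^i\rangle\bigr| \le 2^{-|A|}$, so boundedness is automatic and only the monotonicity needs an argument. A bounded net over finite subsets that is monotone for inclusion converges to its infimum (resp.\ supremum), and that value is manifestly the same for every exhausting sequence, which is exactly what the statement asks.

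To compare two volumes $\Lambda \subset \Lambda'$ I would realise both $+$-states on the common domain $\Lambda'_R$. By definition $\langle\cdot\rangle_{\Lambda'}^{(+)}$ is the $\eta\to\infty$ limit on $\caH_{\Lambda'_R}$ with field on $\partial_R\Lambda'$. For the smaller box the key point is the finite-range geometry: if ${\rm diam}\, A \le R$ and $A\cap\Lambda\ne\emptyset$, then $A\subset\Lambda_R$. Hence freezing the whole shell $\Lambda'_R\setminus\Lambda$ to the $+$ state decouples every site outside $\Lambda_R$ and reproduces exactly the effective Hamiltonian $H_\Lambda^{(+)}$ of Proposition \ref{prop H_+} on $\caH_\Lambda$: the computation is the same projector identity as in that proof, now with freezing set $\Lambda'_R\setminus\Lambda$, and the couplings $A$ with $A\cap\Lambda=\emptyset$ contribute only constants. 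Consequently $\langle\cdot\rangle_\Lambda^{(+)}$ is the $\eta\to\infty$ limit on $\caH_{\Lambda'_R}$ with field on all of $\Lambda'_R\setminus\Lambda$.

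With both states living on $\caH_{\Lambda'_R}$ and carrying one common field strength $\eta\ge 0$, the finite-$\eta$ state is an ordinary Gibbs state of the form \eqref{Ham} at $\beta=1$ with nonnegative couplings, the field terms being singleton couplings $J^1_{\{x\}}=\eta$. The configuration producing $\langle\cdot\rangle_\Lambda^{(+)}$ carries field $\eta$ on $\Lambda'_R\setminus\Lambda = \partial_R\Lambda'\cup(\Lambda'\setminus\Lambda)$, i.e.\ the same field as the $\Lambda'$-configuration on $\partial_R\Lambda'$ together with an \emph{additional} field $\eta$ on the annulus $\Lambda'\setminus\Lambda$. Corollary \ref{cor ineq} then gives, at every finite $\eta$, that this extra $S^1$-field can only increase $\langle \prod_{x\in A}S_x^1\rangle$ and decrease $\langle \prod_{x\in A}S_x^2\rangle$. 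Letting $\eta\to\infty$, each side converges by the freezing version of Proposition \ref{prop H_+}, and the non-strict inequalities pass to the limit, yielding $\langle \prod_{x\in A}S_x^1\rangle_\Lambda^{(+)}\ge\langle \prod_{x\in A}S_x^1\rangle_{\Lambda'}^{(+)}$ and $\langle \prod_{x\in A}S_x^2\rangle_\Lambda^{(+)}\le\langle \prod_{x\in A}S_x^2\rangle_{\Lambda'}^{(+)}$, which is the asserted monotonicity.

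The main obstacle, I expect, is precisely this common-domain realisation together with the order of limits: one must recognise the smaller-box state as a frozen-shell state on the enlarged domain, and then retain a single field parameter $\eta$ so that Corollary \ref{cor ineq} applies at fixed finite $\eta$ and the inequality transports to the limit without interchanging two independent limits. Once monotonicity is established the rest is routine: for $i=1$ the net is decreasing and bounded below, hence converges to $\inf_\Lambda\langle \prod_{x\in A}S_x^1\rangle_\Lambda^{(+)}$ along every sequence $\Lambda\nearrow\bbZ^d$, and for $i=2$ it converges to the corresponding supremum.
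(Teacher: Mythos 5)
Your proposal is correct and follows essentially the same route as the paper: realise both $+$-states on the common domain $\Lambda_R'$ (the paper's hamiltonian $H_{\Lambda,\Lambda_R'}^\eta$ with field on all of $\Lambda_R'\setminus\Lambda$ is exactly your frozen-shell state), apply Corollary \ref{cor ineq} at fixed finite $\eta$ to the extra singleton couplings on the annulus, pass to the $\eta\to\infty$ limit, and conclude by monotonicity plus boundedness. The only cosmetic difference is that you spell out the finite-range geometry and the uniform bound $2^{-|A|}$, which the paper leaves implicit.
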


\begin{proof}
If $\Lambda \subset \Lambda'$, let us define the hamiltonian
\be
H_{\Lambda,\Lambda_R'}^\eta = -\sum_{A \subset \Lambda_R'} \Bigl( J^1_A \prod_{x\in A} S_x^1 + J^2_A \prod_{x\in A} S_x^2 \Bigr) - \eta \sum_{x \in \Lambda_R' \setminus \Lambda} S_x^1.
\ee
Adapting the proof of Proposition \ref{prop H_+}, we can check that we have, for all operators on $\caH_\Lambda$,
\be
\langle a \rangle_{\Lambda}^{(+)} = \lim_{\eta\to\infty} \frac{\Tr a \e{-H_{\Lambda,\Lambda_R'}^\eta}}{\Tr \e{-H_{\Lambda,\Lambda_R'}^\eta}},
\ee
where traces are taken in $\caH_{\Lambda_R'}$. Corollary \ref{cor ineq} implies that
\be
\frac{\Tr \bigl( \prod_{x \in A} S_x^1 \bigr) \e{-H_{\Lambda,\Lambda_R'}^\eta}}{\Tr \e{-H_{\Lambda,\Lambda_R'}^\eta}} \geq \frac{\Tr \bigl( \prod_{x \in A} S_x^1 \bigr) \e{-H_{\Lambda_R'}^\eta}}{\Tr \e{-H_{\Lambda_R'}^\eta}} ;
\ee
the opposite inequality holds when $\prod S_x^1$ is replaced by $\prod S_x^2$. Thus $\langle \prod_{x\in A} S_x^i \rangle_{\Lambda}^{(+)}$ is monotone decreasing for $i=1$, and monotone increasing for $i=2$. It is also bounded, so it converges.
\end{proof}

Finally, let us comment on the relevance of this Gibbs state with $+$ boundary conditions. Consider the case of the isotropic XY model, where $J_A^1 = J_A^2$ for all $A \subset\subset \bbZ^d$. At low temperatures, the infinite volume state $\langle \cdot \rangle^{(+)} = \lim_{\Lambda \nearrow \bbZ^d} \langle \cdot \rangle_{\Lambda}^{(+)}$ is expected to be extremal and to describe a system with spontaneous magnetisation in the direction 1 of the spins. One can apply rotations in the 1-2 plane to get all other (translation-invariant) extremal Gibbs states. Much work remains to be done to make this rigorous, but Theorem \ref{thm inf vol limit} seems to be a useful step.

\section{The case $S=\frac12$}
\label{sec S=1/2}

We can define the spin operators as $S^{i} = \frac12 \sigma^{i}$, where the $\sigma^{i}$s are the Pauli matrices
\be
\label{def Pauli}
\sigma^{1} = \left( \begin{matrix} 0 & 1 \\ 1 & 0 \end{matrix} \right), \qquad \sigma^{2} = \left( \begin{matrix} 0 & -\ii \\ \ii & 0 \end{matrix} \right), \qquad \sigma^{3} = \left( \begin{matrix} 1 & 0 \\ 0 & -1 \end{matrix} \right).
\ee
It is convenient to work with the hamiltonian with interactions in the 1-3 spin directions, namely
\be
\label{def ham 1-3}
H_{\Lambda} = -\sum_{A \subset \Lambda} \Bigl( J_{A}^{1} \prod_{x\in A} S_{x}^{1} + J_{A}^{3} \prod_{x\in A} S_{x}^{3} \Bigr).
\ee
Following Ginibre \cite{Gin}, we introduce the product space $\caH_{\Lambda} \otimes \caH_{\Lambda}$. 
Given an operator $a$ on $\caH_{\Lambda}$, we consider the operators $a_{+}$ and $a_{-}$ on the product space, defined by
\be
a_{\pm} = a \otimes \bbone \pm \bbone \otimes a.
\ee
The Gibbs state in the product space is
\be
\lda \cdot \rda = \frac1{Z(\Lambda)^{2}} \Tr \cdot \e{-H_{\Lambda,+}},
\ee
where $H_{\Lambda,+} = H_{\Lambda} \otimes \bbone + \bbone \otimes H_{\Lambda}$.
Without loss of generality, we set $\beta=1$ in this section.
We also need the Schwinger functions in the product space, namely
\be
\lda \cdot ; \cdot \rda_{s} = \frac1{Z(\Lambda)^{2}} \Tr \cdot \e{-sH_{\Lambda,+}} \cdot \e{-(1-s)H_{\Lambda,+}}.
\ee

\begin{lemma}
\label{lem correlation}
For all observables $a,b$ on $\caH_{\Lambda}$, we have
\[
\begin{split}
&\langle ab \rangle - \langle a \rangle \langle b \rangle = \tfrac12 \lda a_{-} b_{-} \rda, \\
&\langle a; b \rangle_{s} - \langle a \rangle \langle b \rangle = \tfrac12 \lda a_{-} ; b_{-} \rda_{s}.
\end{split}
\]
\end{lemma}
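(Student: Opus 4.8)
The plan is to prove both identities by direct expansion, using the fact that the two summands of $H_{\Lambda,+}$ commute. Since $H_{\Lambda} \otimes \bbone$ and $\bbone \otimes H_{\Lambda}$ commute, we have $\e{-s H_{\Lambda,+}} = \e{-s H_{\Lambda}} \otimes \e{-s H_{\Lambda}}$ for every $s$; in particular $\e{-H_{\Lambda,+}} = \e{-H_{\Lambda}} \otimes \e{-H_{\Lambda}}$. Together with the factorisation of the trace on the product space, $\Tr(c \otimes d) = (\Tr c)(\Tr d)$, this reduces each expression to products of ordinary traces on $\caH_{\Lambda}$.

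For the first identity, I would multiply out $(a \otimes \bbone - \bbone \otimes a)(b \otimes \bbone - \bbone \otimes b)$ to obtain
\[
a_{-} b_{-} = ab \otimes \bbone - a \otimes b - b \otimes a + \bbone \otimes ab.
\]
Inserting $\e{-H_{\Lambda}} \otimes \e{-H_{\Lambda}}$ and factorising the trace, the first and fourth terms each contribute $Z(\Lambda)\, \Tr(ab\, \e{-H_{\Lambda}})$, while the two cross terms each contribute $\Tr(a\, \e{-H_{\Lambda}})\, \Tr(b\, \e{-H_{\Lambda}})$. Dividing by $Z(\Lambda)^{2}$ gives $\lda a_{-} b_{-} \rda = 2(\langle ab \rangle - \langle a \rangle \langle b \rangle)$, as claimed.

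The second identity is the same computation with the Schwinger weights interleaved. Expanding $a_{-}$ and $b_{-}$ as before and inserting $\e{-s H_{\Lambda,+}} = \e{-s H_{\Lambda}} \otimes \e{-s H_{\Lambda}}$ and $\e{-(1-s) H_{\Lambda,+}}$, the factorised trace produces four terms, each a product of two ordinary traces. The key simplification is that cyclicity collapses any factor $\Tr(\e{-s H_{\Lambda}}\, c\, \e{-(1-s) H_{\Lambda}})$ to $\Tr(c\, \e{-H_{\Lambda}})$, using $\e{-s H_{\Lambda}} \e{-(1-s) H_{\Lambda}} = \e{-H_{\Lambda}}$. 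The two cross terms thereby become $Z(\Lambda)^{2} \langle a \rangle \langle b \rangle$, while the two diagonal terms yield $Z(\Lambda)^{2} \langle a; b \rangle_{s}$, giving $\lda a_{-} ; b_{-} \rda_{s} = 2(\langle a; b \rangle_{s} - \langle a \rangle \langle b \rangle)$.

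There is no genuine obstacle here: the whole content is the commutation of the two copies of $H_{\Lambda}$ on the product space and the resulting factorisation of the trace. The only point demanding care is the bookkeeping of operator products across the two tensor factors, namely tracking which copy each of $a$, $b$, and the exponentials acts on, together with the correct use of cyclicity in the Schwinger case, where $b$ sits between the two exponential factors.
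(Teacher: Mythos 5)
Your proposal is correct and follows essentially the same route as the paper: expand $a_-b_-$ (respectively $a_-$, $b_-$ with the Schwinger weights) into four terms, use $\e{-sH_{\Lambda,+}} = \e{-sH_\Lambda} \otimes \e{-sH_\Lambda}$ and the factorisation of the trace over the two tensor factors, and identify the diagonal terms with $2\langle a;b\rangle_s$ and the cross terms with $2\langle a\rangle\langle b\rangle$. The only cosmetic difference is that the paper proves just the Schwinger identity and treats the first line as a special case, whereas you carry out both computations explicitly.
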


\begin{proof}
It is enough to prove the second line. The right side is equal to
\be
\begin{split}
\lda a_{-} ; b_{-} \rda_{s} = \frac1{Z(\Lambda)^{2}} \Bigl[ &\Tr (a \otimes \bbone) \e{-sH_{\Lambda,+}} (b \otimes \bbone) \e{-(1-s) H_{\Lambda,+}} \\
&\qquad\qquad + \Tr (\bbone \otimes a) \e{-sH_{\Lambda,+}} (\bbone \otimes b) \e{-(1-s) H_{\Lambda,+}} \\
&- \Tr (\bbone \otimes a) \e{-sH_{\Lambda,+}} (b \otimes \bbone) \e{-(1-s) H_{\Lambda,+}} \\
&\qquad\qquad - \Tr (a \otimes \bbone) \e{-sH_{\Lambda,+}} (\bbone \otimes b) \e{-(1-s) H_{\Lambda,+}} \Bigr].
\end{split}
\ee
The first two lines of the right side give $2 \langle a ; b \rangle_{s}$ and the last two lines give $2 \langle a \rangle \langle b \rangle$.
\end{proof}

Next, a simple lemma with a useful formula.

\begin{lemma}
\label{lem useful}
For all operators $a,b$ on $\caH_{\Lambda}$, we have
\[
(ab)_{\pm} = \tfrac12 a_{+} b_{\pm} + \tfrac12 a_{-} b_{\mp}.
\]
\end{lemma}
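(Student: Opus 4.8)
This identity is purely algebraic and bilinear in $a$ and $b$, so the plan is to verify it by direct expansion, using nothing more than the definition $c_{\pm} = c \otimes \bbone \pm \bbone \otimes c$ and the multiplication rule on $\caH_{\Lambda} \otimes \caH_{\Lambda}$. The one structural fact I would isolate before computing is that factors acting on different tensor slots commute, while factors on the same slot multiply in order: $(a \otimes \bbone)(\bbone \otimes b) = a \otimes b = (\bbone \otimes b)(a \otimes \bbone)$, whereas $(a \otimes \bbone)(b \otimes \bbone) = ab \otimes \bbone$ and $(\bbone \otimes a)(\bbone \otimes b) = \bbone \otimes ab$. It is worth noting that no commutativity of $a$ and $b$ themselves is needed, which matters since they are arbitrary operators; the products $ab$ stay in their given order within each slot.

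Next I would expand the two terms on the right-hand side, treating the upper and lower signs as two separate cases to keep the $\pm/\mp$ bookkeeping transparent. For the upper sign, expanding $a_{+} b_{+}$ and $a_{-} b_{-}$ gives
\[
a_{+} b_{+} = ab \otimes \bbone + a \otimes b + b \otimes a + \bbone \otimes ab, \qquad a_{-} b_{-} = ab \otimes \bbone - a \otimes b - b \otimes a + \bbone \otimes ab,
\]
so that the ``cross'' terms $a \otimes b$ and $b \otimes a$ appear with opposite signs and cancel upon addition, leaving $\tfrac12(a_{+}b_{+} + a_{-}b_{-}) = ab \otimes \bbone + \bbone \otimes ab = (ab)_{+}$. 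The lower sign is handled the same way: expanding $a_{+} b_{-}$ and $a_{-} b_{+}$ produces $ab \otimes \bbone$ and $-\bbone \otimes ab$ from both products, while the cross terms again cancel, yielding $\tfrac12(a_{+}b_{-} + a_{-}b_{+}) = (ab)_{-}$.

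I do not expect any genuine obstacle here; the only thing to be careful about is the coupled sign convention in $b_{\pm}$ versus $b_{\mp}$, and the fact that in each of the two sums it is precisely the cross terms that cancel while the diagonal terms reinforce. Because this lemma is the algebraic engine that lets truncated correlations be re-expressed through the $(\cdot)_{-}$ variables, as already exploited in Lemma \ref{lem correlation}, getting these signs exactly right is the whole point, so I would double-check each of the two cases independently rather than rely on a single $\pm$-compressed computation.
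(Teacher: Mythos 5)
Your proof is correct, and it is exactly the argument the paper has in mind: the paper omits the details, stating only that ``the proof is straightforward algebra,'' and your direct expansion of $a_{\pm}b_{\pm}$ into the four tensor-product terms, with cancellation of the cross terms $a\otimes b$ and $b\otimes a$, is precisely that straightforward algebra, carried out carefully for both sign cases.
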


The proof is straightforward algebra. Notice that both terms of the right side have {\it positive} factors. Now comes the key observation that leads to positive (and negative) correlations.

\begin{lemma}
\label{lem key}
There exists an orthonormal basis on $\bbC^{2} \otimes \bbC^{2}$ such that $S_{+}^{1}, S_{-}^{1}, S_{+}^{3}, -S_{-}^{3}$ have nonnegative matrix elements.
\end{lemma}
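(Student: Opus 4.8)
The plan is to exhibit the basis explicitly; the only real content is guessing the correct rotation, after which everything is a direct check. I would start in the standard product basis $\{|{\uparrow\uparrow}\rangle, |{\uparrow\downarrow}\rangle, |{\downarrow\uparrow}\rangle, |{\downarrow\downarrow}\rangle\}$ of $S^3$-eigenstates and write the four operators as $4\times4$ matrices. A short computation shows that $S^1_+$ already has nonnegative entries, whereas $S^1_-$ has off-diagonal entries of both signs, and the diagonal matrices $S^3_+ = \operatorname{diag}(1,0,0,-1)$ and $-S^3_- = \operatorname{diag}(0,-1,1,0)$ each carry a strictly negative entry on the diagonal. This last fact is the decisive one: diagonal matrix entries are invariant under any sign change or permutation of the basis vectors, so the negative diagonal entry of $S^3_+$ \emph{cannot} be removed by such a monomial transformation alone. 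A genuine rotation is unavoidable, and its job is precisely to turn $S^3_+$ into a matrix with vanishing (hence sign-free) diagonal.

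The rotation is dictated by the swap symmetry $P(u\otimes v) = v\otimes u$, which satisfies $P\,a_+ P = a_+$ and $P\,a_- P = -a_-$; thus in a basis adapted to the triplet/singlet splitting each $a_+$ is block diagonal and each $a_-$ is block off-diagonal. Within the triplet, $S^3_+$ acts only on the plane spanned by $|{\uparrow\uparrow}\rangle$ and $|{\downarrow\downarrow}\rangle$, with eigenvalues $\pm1$; rotating that plane by $45^\circ$ replaces the block $\operatorname{diag}(1,-1)$ by the nonnegative off-diagonal block $\bigl(\begin{smallmatrix}0&1\\1&0\end{smallmatrix}\bigr)$. This motivates the orthonormal basis
\[
f_1 = \tfrac{1}{\sqrt2}\bigl(|{\uparrow\uparrow}\rangle + |{\downarrow\downarrow}\rangle\bigr),\quad
f_2 = \tfrac{1}{\sqrt2}\bigl(|{\uparrow\uparrow}\rangle - |{\downarrow\downarrow}\rangle\bigr),\quad
f_3 = \tfrac{1}{\sqrt2}\bigl(|{\uparrow\downarrow}\rangle + |{\downarrow\uparrow}\rangle\bigr),\quad
f_4 = \tfrac{1}{\sqrt2}\bigl(|{\downarrow\uparrow}\rangle - |{\uparrow\downarrow}\rangle\bigr).
\]
I would then compute the action of the four operators on the $f_i$, expecting each to collapse to a single off-diagonal hopping term: $S^1_+$ couples $f_1\leftrightarrow f_3$, $S^3_+$ couples $f_1\leftrightarrow f_2$, $S^1_-$ couples $f_2\leftrightarrow f_4$, and $-S^3_-$ couples $f_3\leftrightarrow f_4$. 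With all four operators off-diagonal, the obstruction from diagonal entries is gone.

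The last point is the relative sign inside $f_4$. The four coupled pairs are the edges of a single $4$-cycle $f_1 - f_3 - f_4 - f_2 - f_1$, and the residual signs can be made uniformly positive by a diagonal sign change precisely when the product of the four edge signs is $+1$; this is what makes the orientation $|{\downarrow\uparrow}\rangle - |{\uparrow\downarrow}\rangle$ (rather than the opposite one) the correct choice, and with it all four matrices have entries in $\{0,1\}$. The main obstacle is conceptual rather than computational: because $S^1_+$ and $S^3_+$ do not commute, it is not a priori clear that one rotation can off-diagonalize all four operators simultaneously, and the crux is to recognize that the swap-adapted rotation which cures $S^3_+$ also casts the remaining three into hopping form with a globally consistent sign pattern.
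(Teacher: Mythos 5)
Your proof is correct and takes essentially the same route as the paper: your basis $f_1,\dots,f_4$ is exactly the paper's Bell-type basis $p_+,p_-,q_+,q_-$ (with the same sign convention for the singlet), and the verification that $S^1_+$, $S^1_-$, $S^3_+$, $-S^3_-$ each act as a single hopping term with matrix elements in $\{0,1\}$ is precisely the paper's direct computation. The extra structural remarks (swap symmetry forcing $a_+$ block-diagonal and $a_-$ block-off-diagonal, invariance of diagonal entries under monomial transformations, and the $4$-cycle sign consistency) are sound motivation but not needed beyond the explicit check.
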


As a consequence, there exists an orthonormal basis of $\caH_{\Lambda} \otimes \caH_{\Lambda}$ such that $S_{x,+}^{1}, S_{x,-}^{1}, S_{x,+}^{3}$, and $-S_{x,-}^{3}$ have nonnegative matrix elements.

\begin{proof}[Proof of Lemma \ref{lem key}]
For $\eps_{1}, \eps_{2} = \pm$, let $| \eps_{1}, \eps_{2} \rangle$ denote the eigenvectors of $S^{3} \otimes \bbone$ and $\bbone \otimes S^{3}$ with respective eigenvalues $\frac12 \eps_{1}$ and $\frac12 \eps_{2}$. It is well-known that $S^{1} \otimes \bbone \; | \eps_{1}, \eps_{2} \rangle = \frac12 |-\eps_{1}, \eps_{2} \rangle$ and similarly for $\bbone \otimes S^{1}$. The convenient basis in $\bbC^{2} \otimes \bbC^{2}$ consists of the following four elements:
\be
\begin{aligned}
&p_{+} = \tfrac1{\sqrt2} \bigl( |++\rangle + |--\rangle \bigr), \qquad & q_{+} = \tfrac1{\sqrt2} \bigl( |-+\rangle + |+-\rangle \bigr), \\
&p_{-} = \tfrac1{\sqrt2} \bigl( |++\rangle - |--\rangle \bigr), & q_{-} = \tfrac1{\sqrt2} \bigl( |-+\rangle - |+-\rangle \bigr).
\end{aligned}
\ee
Direct calculations show that
\be
\begin{split}
&(p_{+}, S_{+}^{1} q_{+}) = (q_{+}, S_{+}^{1} p_{+}) = 1, \\
&(p_{-}, S_{-}^{1} q_{-}) = (q_{-}, S_{-}^{1} p_{-}) = 1, \\
&(p_{+}, S_{+}^{3} p_{-}) = (p_{-}, S_{+}^{3} p_{+}) = 1, \\
&(q_{+}, S_{-}^{3} q_{-}) = (q_{-}, S_{-}^{3} q_{+}) = -1.
\end{split}
\ee
All other matrix elements are zero.
\end{proof}

\begin{proof}[Proof of Theorem \ref{thm ineq} for $S=\frac12$]
We use Lemma \ref{lem correlation} in order to get
\be
\Bigl\langle \prod_{x\in A} S_{x}^{1} ; \prod_{x\in B} S_{x}^{1} \Bigr\rangle_{s} - \Bigl\langle \prod_{x\in A} S_{x}^{1} \Bigr\rangle \, \Bigl\langle \prod_{x\in B} S_{x}^{1} \Bigr\rangle = \tfrac12 \Bigl\langle \!\! \Bigl\langle \Bigl( \prod_{x\in A} S_{x}^{1} \Bigr)_{-} ; \Bigl( \prod_{x\in B} S_{x}^{1} \Bigr)_{-} \Bigr\rangle \!\! \Bigr\rangle_{s}.
\ee
In order to make visible the sign of the right side, we expand the exponentials in Taylor series, so as to get a positive linear combination of terms of the form
\be
\Tr_{\caH_\Lambda \otimes \caH_\Lambda} \Bigl( \prod_{x\in A} S_{x}^{1} \Bigr)_{-} (-H_{\Lambda,+})^{k} \Bigl( \prod_{x\in B} S_{x}^{1} \Bigr)_{-} (-H_{\Lambda,+})^{\ell}
\ee
with $k,\ell \in \bbN$. Expanding $(-H_{\Lambda,+})^k$ and $(-H_{\Lambda,+})^\ell$, we get a positive linear combination of
\be
\Tr_{\caH_\Lambda \otimes \caH_\Lambda} \Bigl( \prod_{x\in A} S_{x}^{1} \Bigr)_{-} \prod_{i=1}^k \Bigl( \prod_{x\in A_i} S_x^{\eps_i} \Bigr)_+ \Bigl( \prod_{x\in B} S_{x}^{1} \Bigr)_{-} \prod_{j=1}^\ell \Bigl( \prod_{x\in A_j'} S_x^{\eps_j'} \Bigr)_+
\ee
with $\eps_i, \eps_j' \in \{1,3\}$. Further, all products $(\prod S_{x}^{i})_{\pm}$ can be expanded using Lemma \ref{lem useful} in polynomials of $S_{x,\pm}^{i}$, still with positive coefficients. Finally, observe that the total number of operators $S_{x,-}^3$, $x \in \Lambda$, is always even; then each $S_{x,-}^{3}$ can be replaced by $-S_{x,-}^{3}$. We now have the trace of a polynomial, with positive coefficients, of matrices with nonnegative elements (by Lemma \ref{lem key}). This is positive.

The second inequality (with $S^{3}$ instead of $S^{2}$) is similar. The only difference is that $(\prod S_{x}^{3})_{-}$ gives a polynomial where the number of $S_{x,-}^{3}$ is odd. Hence the negative sign.
\end{proof}

\section{The case $S=1$}
\label{sec S=1}

This section is much more involved, and our result is sadly restricted to the ground state.
Our strategy is inspired by the work of Nachtergaele on graphical representations of the Heisenberg model with large spins \cite{Nac}. We consider a system where each site hosts a pair of spin $\frac12$ particles. The inequalities of Theorem \ref{thm ineq} apply. By projecting onto the triplet subspaces, one gets a correspondence with the original spin 1 system. We prove that all ground states of the new model lie in the triplet subspace, so the inequality can be transferred. These steps are detailed in the rest of the section.

It is perhaps worth noticing that the tensor products in this section play a different r\^ole than those in Section \ref{sec S=1/2}.

\subsection{The new model}

We introduce the new lattice $\tilde\Lambda = \Lambda \times \{1,2\}$. The new Hilbert space is
\be
\tilde\caH_{\Lambda} = \otimes_{x\in\Lambda} (\bbC^{2} \otimes \bbC^{2}) \simeq \otimes_{x \in \tilde\Lambda} \bbC^{2}.
\ee
Let $R^i$ be the following operator on $\bbC^2 \otimes \bbC^2$:
\be
\label{R}
 R^i = \tfrac{1}{2}(\sigma^i\otimes\mathbbm{1}+\mathbbm{1}\otimes\sigma^i).
\ee
Here, $\sigma^{i}$ are the Pauli matrices in $\bbC^{2}$ as before.
We denote $R_{x}^{i} = R^{i} \otimes \bbone_{\Lambda \setminus \{x\}}$ the corresponding operator at site $x \in \Lambda$. As before, we choose the interactions to be in the 1-3 spin directions; the hamiltonian on $\tilde\caH_{\Lambda}$ is
\be
\tilde{H}_{\Lambda} = -\sum_{A\subset\Lambda} \Bigl( J^1_A\prod_{x\in A}R^1_x+J^3_A\prod_{x\in A}R^3_x \Bigr).
\ee
The coupling constants $J_{A}^{i}$ are the same as those of the original model on $\caH_{\Lambda}$. 
The expected value of an observable $a$ in the Gibbs state with hamiltonian  $\tilde{H}_{\Lambda}$ is
\be
\langle a \rangle^{\sim} = \frac{1}{\tilde{Z}(\Lambda)} \Tr a \e{-\beta \tilde{H}_{\Lambda}},
\ee
where the normalisation $\tilde{Z}(\Lambda)$ is the partition function
\be
\tilde{Z}(\Lambda) = \Tr \e{-\beta \tilde{H}_{\Lambda}}.
\ee
We similarly define Schwinger functions $\langle \cdot; \cdot \rangle^{\sim}_{s}$ for $s \in [0,1]$.

It is useful to rewrite $\tilde H_\Lambda$ as the hamiltonian of spin $\frac12$ particles on the extended lattice $\tilde\Lambda$. Given a subset $X \subset \tilde\Lambda$, we denote ${\rm supp} X$ its natural projection onto $\Lambda$, i.e.
\be
{\rm supp} X = \bigl\{ x \in \Lambda : (x,1) \in X \text{ or } (x,2) \in X \bigr\}.
\ee
We also denote $D(\tilde\Lambda)$ the family of subsets of $\tilde\Lambda$ where each site of $\Lambda$ appears at most once. Notice that $|D(\tilde\Lambda)| = 3^{|\Lambda|}$. Finally, let us introduce the coupling constants
\be
\tilde J_{X}^{i} = \begin{cases} 2^{-|X|} \, J^{i}_{{\rm supp} X} & \text{if } X \in D(\tilde\Lambda), \\ 0 & \text{otherwise.} \end{cases}
\ee
From these definitions, we can write $\tilde H_{\Lambda}$ using Pauli operators as
\be
\label{ham new look}
\tilde{H}_{\Lambda} = -\sum_{X\subset\tilde{\Lambda}} \Bigl( \tilde{J}^1_X \prod_{x\in X} \sigma^1_x + \tilde{J}^3_X\prod_{x\in X} \sigma^3_x \Bigr).
\ee

\subsection{Correspondence with the spin 1 model}

The Hilbert space at a given site, $\bbC^2 \otimes \bbC^2$, is the orthogonal sum of the triplet subspace (that is, the symmetric subspace, which is of dimension 3) and of the singlet subspace (of dimension 1). Let $P^{\rm triplet}$ denote the projector onto the triplet subspace, and let $P_\Lambda^{\rm triplet} = \otimes_{x\in\Lambda} P^{\rm triplet}$. We define a new Gibbs state, namely
\be
\langle a \rangle' = \frac{1}{Z'(\Lambda)} \Tr a P_\Lambda^{\rm triplet} \e{-\beta \tilde{H}_{\Lambda}},
\ee
with partition function $Z'(\Lambda) = \Tr P_\Lambda^{\rm triplet} \e{-\beta \tilde{H}_{\Lambda}}$. In order to state the correspondence between the models with different spins, let 
 $V : \bbC^{3} \to \bbC^{2} \otimes \bbC^{2}$ denote an isometry such that
\be
\begin{split}
& V^{*} V = \bbone_{\bbC^{3}}, \\
& V V^{*} = P^{\rm triplet},
\end{split}
\ee
One can check that $S^i = V^* R^i V$, $i=1,2,3$, give spin operators in $\bbC^3$. Let $V_\Lambda = \otimes_{x\in\Lambda} V$. For all observables on $a \in \caH_\Lambda$, we have the identity
\be\label{newgibbs}
\langle a \rangle = \langle V_\Lambda a V_\Lambda^*\rangle'.
\ee

\subsection{All ground states lie in the triplet subspace}

Let $Q_{\Lambda,A}$ be the projector onto triplets on $A$, and singlet on $\Lambda\setminus A$:
\be
Q_{\Lambda,A} = \Bigl( \otimes_{x\in A} P^{\rm triplet} \Bigr) \otimes \Bigl( \otimes_{x\in\Lambda\setminus A} (1-P^{\rm triplet}) \Bigr).
\ee
One can check that $[R^i_x, Q_{\Lambda,A}]=0$ for all $i=1,2,3$, all $x\in\Lambda$, and all $A \subset\Lambda$. Further, since the operators $R^i$ give zero when applied on singlets, we have
\be
\label{ham in subspace}
Q_{\Lambda,A} \, \tilde H_\Lambda = -Q_{\Lambda,A} \sum_{B \subset A} \Bigl( J^1_B \prod_{x\in B} R^1_x+J^3_B \prod_{x\in B} R^3_x \Bigr).
\ee

\begin{lemma}
\label{lem gs}
The ground state energy of $\tilde H_\Lambda$ is a strictly decreasing function of $J_A^i$, for all $i=1,3$ and all $A \subset \Lambda$.
\end{lemma}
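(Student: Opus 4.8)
The plan is to show that the ground state energy $E_0(\{J_A^i\}) = \inf \operatorname{spec} \tilde H_\Lambda$ is strictly decreasing in each coupling $J_A^i$ by combining a variational (Feynman--Hellmann) argument with a strict-monotonicity input coming from non-degeneracy or irreducibility. First I would recall that the ground state energy of a self-adjoint operator depending smoothly on a parameter $t$ satisfies $\frac{\partial E_0}{\partial t} = \langle \psi_0, \frac{\partial \tilde H_\Lambda}{\partial t}\, \psi_0\rangle$ whenever the ground state $\psi_0$ is non-degenerate (or, more generally, $\frac{\partial E_0}{\partial t}$ equals the minimum of this expectation over the ground state eigenspace). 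Here $-\frac{\partial \tilde H_\Lambda}{\partial J_A^i} = \prod_{x\in A} R_x^i$, so the claim reduces to showing that the ground-state expectation of $\prod_{x\in A} R_x^i$ is strictly \emph{positive}, which gives $\frac{\partial E_0}{\partial J_A^i} < 0$.

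The natural route to strict positivity is a Perron--Frobenius argument. Using the basis-positivity already established for spin $\tfrac12$, I would look for an orthonormal basis of $\tilde\caH_\Lambda$ in which $-\tilde H_\Lambda$ has nonnegative off-diagonal matrix elements, so that $\e{-\beta \tilde H_\Lambda}$ has strictly positive entries (after checking irreducibility of the associated graph, which holds because the coupling constants connect the lattice through both the $\sigma^1$ and $\sigma^3$ channels). Recall from \eqref{ham new look} that $\tilde H_\Lambda = -\sum_X (\tilde J_X^1 \prod_{x\in X}\sigma_x^1 + \tilde J_X^3 \prod_{x\in X}\sigma_x^3)$ with all $\tilde J_X^i \ge 0$. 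In the standard $\sigma^3$-eigenbasis the operators $\prod \sigma_x^3$ are diagonal and $\prod\sigma_x^1$ are permutation-type matrices with nonnegative entries; after an appropriate sign conjugation (a gauge transformation flipping signs on a sublattice, as in the spin-$\tfrac12$ proof where $S_{x,-}^3 \to -S_{x,-}^3$) one arranges both families to have nonnegative entries simultaneously. Perron--Frobenius then gives a \emph{unique} ground state $\psi_0$ with strictly positive coordinates in that basis, hence $\langle \psi_0, \prod_{x\in A} R_x^i\, \psi_0\rangle > 0$ because $\prod_x R_x^i$ is (up to the same conjugation) a nonnegative matrix with at least one strictly positive entry connecting positively-weighted basis vectors.

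With uniqueness of the ground state in hand, the Feynman--Hellmann formula applies without the eigenspace caveat, and strict positivity of the expectation yields the strict inequality $\frac{\partial E_0}{\partial J_A^i} < 0$. I would present this as: (i) fix the positivity basis and invoke Perron--Frobenius to get a non-degenerate, strictly positive ground state; (ii) differentiate $E_0$ and apply Feynman--Hellmann; (iii) conclude strict negativity of the derivative from strict positivity of $\langle\prod_x R_x^i\rangle$ in the ground state.

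The main obstacle I anticipate is establishing the Perron--Frobenius hypotheses cleanly for all $A$ simultaneously, specifically verifying the irreducibility of the off-diagonal part and ensuring that a \emph{single} sign-conjugation makes \emph{both} the $\sigma^1$- and $\sigma^3$-interaction terms have nonnegative matrix elements at once (unlike the spin-$\tfrac12$ doubled-space argument, the factor $\prod\sigma_x^3$ can carry negative signs that must be absorbed consistently). A secondary subtlety is the degenerate-coupling boundary case: when $J_A^i = 0$ the corresponding term drops out of $\tilde H_\Lambda$, so irreducibility must be argued from the \emph{remaining} nonzero couplings, and one should state the hypothesis (implicit in the model) that the interaction graph is connected; if it is not, the statement should be read componentwise. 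I expect the positivity/irreducibility bookkeeping to be the only genuinely delicate part, with the variational step being routine.
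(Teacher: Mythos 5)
Your proposal has a genuine gap: the whole argument rests on Perron--Frobenius \emph{uniqueness} (irreducibility) to produce a non-degenerate, strictly positive ground state and a strictly positive expectation $(\psi_0,\prod_{x\in A}R^i_x\,\psi_0)>0$, but the lemma is asserted for arbitrary couplings $J^i_A\geq 0$, and in that generality these hypotheses simply fail. Nothing prevents all $J^1$-couplings from vanishing; then $\tilde H_\Lambda$ is diagonal in the $\sigma^3$-basis, $\e{-\beta\tilde H_\Lambda}$ is diagonal (hence maximally reducible), and the ground state is degenerate. Your fallbacks (connected interaction graph, componentwise reading) do not repair this: the two-site Hamiltonian $-J\sigma^3_u\sigma^3_v$ is ``connected'' but has a two-fold degenerate ground space, and for the perturbation $-\eps\,\sigma^1_u$ one computes exactly $E_0=-\sqrt{J^2+\eps^2}$, which \emph{is} strictly decreasing in $\eps$, yet $P_0\,\sigma^1_u\,P_0=0$, so every version of first-order Feynman--Hellmann (including the degenerate one with an extremum over the eigenspace) gives derivative $0$: the decrease is purely second order, and your outline has no mechanism to detect it. A second, technical error: the ``sign conjugation making both the $\sigma^1$- and $\sigma^3$-terms nonnegative'' is both impossible and unnecessary --- conjugation by a diagonal sign matrix leaves a diagonal matrix such as $\prod_x\sigma^3_x$ unchanged entry by entry, so it cannot absorb its negative signs; but none are needed, since in a Trotter factorization the $\sigma^3$-part enters only through $\e{\frac1n\sum_X\tilde J^3_X\prod_{x\in X}\sigma^3_x}$, which is diagonal with positive entries regardless of those signs.

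The paper's proof is engineered precisely to avoid both issues. It uses only the weak form of Perron--Frobenius --- a matrix with nonnegative entries has \emph{some} top eigenvector with nonnegative coefficients; no irreducibility, no uniqueness --- to get $(\psi_0,\prod_{x\in Y}\sigma^1_x\,\psi_0)\geq 0$, whence the variational bound $E_0\bigl(\tilde H_\Lambda-\eps\prod_{x\in Y}\sigma^1_x\bigr)\leq E_0(\tilde H_\Lambda)-\eps\,(\psi_0,\prod_{x\in Y}\sigma^1_x\,\psi_0)$ settles the case of nonzero expectation. When the expectation vanishes, it shifts $\hat H_\Lambda=\tilde H_\Lambda-c\bbone$ so that all eigenvalues are negative and computes $\bigl(\psi_0,(\hat H_\Lambda-\eps\prod_{x\in Y}\sigma^1_x)^2\psi_0\bigr)=E_0(\hat H_\Lambda)^2+{\rm const}\cdot\eps^2>E_0(\hat H_\Lambda)^2$ (the cross terms vanish because the expectation is zero, and the square of the Pauli product is a positive multiple of the identity), which forces $E_0(\hat H_\Lambda-\eps\prod_{x\in Y}\sigma^1_x)<E_0(\hat H_\Lambda)$. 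This square trick is exactly the second-order mechanism your proposal is missing, and it is what makes the lemma provable with no non-degeneracy or connectivity assumptions whatsoever.
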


It follows from this lemma and Eq.\ \eqref{ham in subspace} that all ground states lie in the subspace of $Q_{\Lambda,\Lambda} = P_\Lambda^{\rm triplet}$. To see this, note that $Q_{\Lambda,A}$ has the effect of setting $J^i_B=0$ for $B\nsubseteq A$. Then if $A'\supset A$, $Q_{\Lambda,A'}\tilde H_\Lambda$ has larger coupling constants for those $B$ such that $B\nsubseteq A$ but $B\subset A'$ (other coupling constants are unaffected). Hence having more sites in the triplet subspace leads to strictly lower energy.

\begin{proof}[Proof of Lemma \ref{lem gs}]
We actually prove the result for the hamiltonian \eqref{ham new look} and the couplings $\tilde J_X^i$, which implies the lemma.
With $E_0(a)$ denoting the ground state energy of the operator $a$, we show that
\be
\label{this is enough}
E_0 \Bigl( \tilde H_\Lambda - \eps \prod_{x \in Y} \sigma_x^1 \Bigr) < E_0(\tilde H_\Lambda),
\ee
for any $\eps>0$, and any $Y\subset\tilde\Lambda$. Let $\psi_0$ denote the ground state of $\tilde H_\Lambda$. It is also eigenstate of $\e{-\tilde H_\Lambda}$ with the largest eigenvalue. Using the Trotter product formula, we have
\be
\e{-\tilde H_\Lambda} = \lim_{n\to\infty} \biggl[ \Bigl( 1 + \frac1n \sum_{X \subset \Lambda} \tilde J_X^1 \prod_{x\in X} \sigma_x^1 \Bigr) \e{\frac1n \sum_{X \subset \Lambda} \tilde J_X^3 \prod_{x\in X} \sigma_x^3} \biggr]^n,
\ee
which, in the basis where the Pauli matrices are given by \eqref{def Pauli}, is a product of matrices with nonnegative elements. By a Perron-Frobenius argument, $\psi_0$ can be chosen as a linear combination of the basis vectors with nonnegative coefficients. Then
\be
E_0 \Bigl( \tilde H_\Lambda - \eps \prod_{x \in Y} \sigma_x^1 \Bigr) \leq \Bigl( \psi_0, \Bigl( \tilde H_\Lambda - \eps \prod_{x \in Y} \sigma_x^1 \Bigr) \psi_0 \Bigr) = E_0(\tilde H_\Lambda) - \eps \Bigl( \psi_0, \Bigl( \prod_{x \in Y} \sigma_x^1 \Bigr) \psi_0 \Bigr).
\ee
If $(\psi_0, (\prod_{x \in Y} \sigma_x^1) \psi_0) \neq 0$, then it is positive and the conclusion follows. Otherwise, let $\hat H_\Lambda = \tilde H_\Lambda - c \bbone$ with $c$ large enough so that all eigenvalues of $\hat H_\Lambda$ are negative. If $(\psi_0, (\prod_{x \in Y} \sigma_x^1) \psi_0) = 0$, using $\bigl( \prod_{x\in Y} \sigma^1_x \bigr)^2 = 4^{-|Y|} \bbone$, we have
\be
\Bigl( \psi_0, \Bigl( \hat H_\Lambda - \eps \prod_{x \in Y} \sigma_x^1 \Bigr)^2 \psi_0 \Bigr) = E_0(\hat H_\Lambda)^2 + (4^{-|Y|} \eps)^2.
\ee
This implies that $E_0(\hat H_\Lambda - \eps \prod \sigma_x^1) < E_0(\hat H_\Lambda)$, hence the strict inequality \eqref{this is enough}.

One can replace $\sigma_x^1$ with $\sigma_x^3$ and prove Inequality \eqref{this is enough} in the same fashion; indeed, one can choose a basis where $\sigma^3$ is like $\sigma^1$ in \eqref{def Pauli}, and $\sigma^1$ is like $-\sigma^3$.
\end{proof}

\subsection{Proof of Theorem \ref{thm ineq2}}

We can assume that for any $x \in \Lambda$, there exist $i \in \{1,3\}$ and $A \ni x$ such that $J_A^i > 0$ --- the extension to the general case is straightforward. Since all ground states lie in the triplet subspace, we have for all subsets $A,B\subset \Lambda$ and for all $s \in [0,1]$,
\be
\begin{split}
\lim_{\beta\to\infty} \Bigl\langle \prod_{x\in A}S_x^1; \prod_{x\in B}S_x^1 \Bigr\rangle_{s} &= 2^{-|A|-|B|} \sumtwo{X \in D(\tilde\Lambda)}{{\rm supp}X = A} \; \sumtwo{Y \in D(\tilde\Lambda)}{{\rm supp}Y = B} \lim_{\beta\to\infty} \Bigl\langle \prod_{x\in X}\sigma_x^1; \prod_{x\in Y} \sigma_x^1 \Bigr\rangle^{\sim}_{s} 
 \\
&\geq 2^{-|A|-|B|} \sumtwo{X \in D(\tilde\Lambda)}{{\rm supp}X = A} \; \sumtwo{Y \in D(\tilde\Lambda)}{{\rm supp}Y = B} \lim_{\beta\to\infty} \Bigl\langle \prod_{x\in X}\sigma_x^1 \Bigr\rangle^{\sim} \; \Bigl\langle\prod_{x\in Y} \sigma_x^1 \Bigr\rangle^{\sim} \\
 &= \lim_{\beta\to\infty} \Bigl\langle \prod_{x\in A} S_x^1 \Bigr\rangle \; \Bigl\langle \prod_{x\in B} S_x^1 \Bigr\rangle.
\end{split}
\ee
We used Theorem \ref{thm ineq}. We have obtained the first inequality of Theorem \ref{thm ineq2}. The second inequality follows in the same way.

\medskip
\noindent
{\bf Acknowledgments:} DU is grateful to J\"urg Fr\"ohlich, Charles-\'Edouard  Pfister, and Yvan Velenik, who taught him much about correlation inequalities. We thank the referees for useful comments. DU thanks the Erwin Schr\"odinger Institute for a useful visit during the program ``Quantum many-body systems, random matrices, and disorder'' in July 2015. BL is supported by the EPSRC grant EP/HO23364/1.

{
\renewcommand{\refname}{\small References}
\bibliographystyle{symposium}

}

\end{document}